\title{Dialectica Interpretation with Marked Counterexamples}
\author{Trifon Trifonov\thanks{The author gratefully acknowledges financial
support by the Bulgarian National Science Fund within project DO 02-102/23.04.2009 and by the European Social Fund within project BG 051PO001-3.3.04/28.08.2009.}
\institute{Faculty of Mathematics and Informatics, Sofia University}
	\email{triffon@fmi.uni-sofia.bg}}
\newtheorem{theorem}{Theorem}
\newtheorem{lemma}{Lemma}
\theoremstyle{definition}
\newtheorem{definition}{Definition}
\theoremstyle{remark}
\newcommand{\ve}{\varepsilon}
\newcommand{\pair}[1]{\left\langle #1\right\rangle}
\newcommand{\abs}[1]{\left\vert #1 \right\vert}
\newcommand{\set}[1]{\left\{ #1\right\}}
\newcommand{\pstar}{$(*)$ }
\newcommand{\enumr}{\renewcommand{\theenumi}{\roman{enumi}}}
\newcommand{\bnor}{\; |\;}
\newcommand{\bndef}{\quad \mathtt{::=} \quad}
\newcommand{\imp}{\rightarrow}
\newcommand{\eqv}{\leftrightarrow}
\newcommand{\leqv}{\eqv}
\newcommand{\true}{\mathsf{tt}}
\newcommand{\false}{\mathsf{ff}}
\newcommand{\yields}{\vdash}
\newcommand{\subst}[2]{\left[#1 := #2\right]}
\newcommand{\quantifier}[2]{}
\newcommand{\forallsymbol}{\forall}
\newcommand{\lambdasymbol}{\lambda}
\newcommand{\existssymbol}{\exists}
\newcommand{\existsclsymbol}{\tilde\exists}
\newcommand{\all}{\quantifier\forallsymbol}
\newcommand{\lam}{\quantifier\lambdasymbol}
\newcommand{\ex}{\quantifier\existssymbol}
\newcommand{\excl}{\quantifier\existsclsymbol}
\newcommand{\nosubscriptquantifiers}{
  \renewcommand{\quantifier}[2]{##1 {##2}\,}
}
\newcommand{\FV}{\mathsf{FV}}
\newcommand{\BV}{\mathsf{BV}}% End of \usepackage{logic}
\newcommand{\PP}{\mathcal{P}}
\newcommand{\HA}{\mathrm{HA}}
\newcommand{\NA}{\mathsf{NA}}
\newcommand{\Cases}{\mathsf{Cases}}
\newcommand{\efq}{\mathsf{efq}}
\newcommand{\Truth}{\mathsf{AxT}}
\newcommand{\AxTrue}{\Truth}
\newcommand{\Ind}{\mathsf{Ind}}
\newcommand{\atom}[1]{\mathrm{at}(#1)}
\newcommand{\False}{\mathrm{F}}
\newcommand{\FA}{\mathsf{FA}}% End of \usepackage{proof}
\newcommand{\To}{\Rightarrow}
\newcommand{\Times}{\times}
\newcommand{\nat}{\mathtt{nat}}
\newcommand{\bool}{\mathtt{bool}}
\newcommand{\Succ}{\mathsf{S}}
\newcommand{\pl}{\llcorner}
\newcommand{\pr}{\lrcorner}
\newcommand{\inhab}{\sqcup}
\newcommand{\R}{\mathcal{R}}
\newcommand{\Rec}{\R}
\newcommand{\Cas}{\mathcal{C}}
\newcommand{\lett}[2]{\textbf{let }#1\textbf{ in }#2}
\newcommand{\nulltype}{\ve}
\newcommand{\nullterm}{\ve}
\newcommand{\context}[2]{#1[#2]}
\newcommand{\hole}{\context{}{}}
\newcommand{\holetype}{\diamond}
\newcommand{\extapply}[2]{#1 \circ #2}
\newcommand{\dplus}[1]{{#1}^+}
\newcommand{\dminus}[1]{{#1}^-}
\newcommand{\dialast}[1]{{#1}^\ast}
\newcommand{\typesymbol}{\tau}
\newcommand{\typeextr}[2]{#1{\typesymbol}(#2)}
\newcommand{\typep}{\typeextr \dplus}
\newcommand{\typen}{\typeextr \dminus}
\newcommand{\typeast}{\typeextr \dialast}
\newcommand{\typea}{\typeast}
\newcommand{\intd}[3]{\abs{#1}^{#2}_{#3}}
\newcommand{\et}[1]{[\![#1]\!]}
\newcommand{\etmark}[3]{#2{#1{#3}}}
\newcommand{\etdp}{\etmark\et\dplus}
\newcommand{\etdn}{\etmark\et\dminus}
\newcommand{\contract}[3]{#2\,{\stackrel{#1}{\bowtie}}\,#3}
\newcommand{\contractterm}[1]{T^{#1}_{\contract{}{}{}}}
\def\typesymbol{\sigma}%
\newcommand{\etds}[1]{\{\!|#1|\!\}}%
\newcommand{\etdsp}{\etmark\etds\dplus}%
\newcommand{\etdsn}{\etmark\etds\dminus}%
\def\typesymbol{\rho}%
\newcommand{\marktype}{\bool^\bot}%
\newcommand{\markby}[2]{#1\blacktriangleright #2}%
\newcommand{\marked}[1]{{#1}^\multimap}
\newcommand{\typem}{\typeextr\marked}
\newcommand{\IndBool}{\Cases}
\newcommand{\IndNat}{\Ind}
\newcommand{\RecNat}{\Rec}
\renewcommand{\bool}{\mathtt{B}}
\renewcommand{\nat}{\mathtt{N}}
\newcommand{\tprod}{\Times}
\newcommand{\ite}[3]{\Cas\,#1\,#2\,#3}
\newcommand{\size}[1]{\lceil #1\rceil}
\newcommand{\msl}[1]{\lceil\!\lceil #1 \rceil\!\rceil}
\newcommand{\reduces}{\stackrel r{\mapsto}}
\newcommand{\redeq}{\stackrel r=}
\begin{document}
  \maketitle
  
\begin{abstract}
  G\"odel's functional ``Dialectica'' interpretation can be used to extract functional programs from non-constructive proofs in arithmetic by employing two sorts of higher-order witnessing terms: positive realisers and negative counterexamples. In the original interpretation decidability of atoms is required to compute the correct counterexample from a set of candidates. When combined with recursion, this choice needs to be made for every step in the extracted program, however, in some special cases the decision on negative witnesses can be calculated only once. We present a variant of the interpretation in which the time complexity of extracted programs can be improved by marking the chosen witness and thus avoiding recomputation. The achieved effect is similar to using an abortive control operator to interpret computational content of non-constructive principles.
\end{abstract}

\section{Introduction}
G\"odel's Dialectica interpretation \cite{goedel} is one of the first systematic methods for obtaining computational content from proofs in classical arithmetic. Different variants of the interpretation have been proposed to aid ``proof mining'' --- looking for constructive information inside what appears to be a non-constructive argument (e.g. \cite{kohlenbook}). Particularly interesting is the case where terms of non-ground types (i.e., functional programs) are being automatically obtained from a weak existence proof. The extracted algorithms usually calculate the witness for the existential quantifier in an indirect manner, often obscure and surprising. There are also a number of competing techniques for program extraction like (refined) $A$-translation \cite{friedman,bbs}, Krivine's realisability, control operators \cite{griffin}, the $\lambda\mu$-calculus \cite{parigot} and others. All of these methods systematically find correct programs, but the relations between them are still being investigated.

Another topic of ongoing research is whether such approaches are feasible for practical extraction of sufficiently efficient correct algorithms. Even though many automatic software systems for handling large proof objects are being actively developed, additional work is needed to identify and remove possible redundancies, so that the extracted programs are more readable, shorter and faster. Examples of such optimisations include uniform decorations \cite{uniformha,danhernest,decorating}, soundness-preserving program transformations \cite{makaroventcs}, avoiding syntactic repetition \cite{quasilin}. In the present paper we suggest another such technique for the Dialectica interpretation, which marks computed counterexamples that are determined to be valid. We demonstrate how in certain cases this approach can reduce the average time complexity of the obtained program by terminating recursive search immediately after a correct counterexample is found. The suggested change is an extension of the interpretation variant given in \cite{quasilin}. The reason is that even though counterexample marking can be formally applied directly to the original Dialectica interpretation, its practical effects are only visible when (at least) syntactic repetition is avoided.

\section{Negative Arithmetic}

We work in a restriction of Heyting Arithmetic with finite types (denoted $\HA^\omega$ in \cite{troelstra}) to the language of $\imp$ and $\forall$. We refer to the resulting system as Negative Arithmetic ($\NA^\omega$).

\begin{definition}
 Types $(\rho,\sigma)$, (object) terms $(s,t)$ and formulas $(A,B)$ are defined as follows:
\begin{eqnarray*}
 \rho,\sigma &\bndef&  \bool \bnor \nat \bnor \alpha \bnor
\rho \To \sigma \bnor \rho \Times \sigma\\
 s,t & \bndef &x^\rho \bnor (\lam {x^\rho} t^\sigma)^{\rho\To\sigma} \bnor (s^{\rho\To\sigma}t^\rho)^\sigma 
\bnor \pair{s^\rho,t^\sigma}^{\rho\Times\sigma} \bnor (t^{\rho\Times\sigma}\pl)^\rho \bnor (t^{\rho\Times\sigma}\pr)^\sigma \bnor \\
&& \true^\bool \bnor \false^\bool \bnor 0^\nat \bnor \Succ^{\nat\To\nat} \bnor
\Cas^{\bool\To\sigma\To\sigma\To\sigma} \bnor \RecNat^{\nat\To\sigma\To(\nat\To\sigma\To\sigma)\To\sigma} \\
A,B &\bndef & \atom{t^\bool} \bnor A \imp B \bnor \all {x^\rho} A
\end{eqnarray*}
The base types of booleans $\bool$ and natural numbers $\nat$ are equipped with the usual constructors and structural recursor constants. Here $x$ denotes a typed object variable and $\alpha$ denotes a type variable. Freely occurring type variables allow for a restricted form of polymorphism.
The sets of free variables $\FV(t)$, $\FV(A)$ and bound variables $\BV(t)$, $\BV(A)$
are defined inductively as usual. Substitution of terms for object variables $s\subst{x}{t}$, $A\subst{x}{t}$
 is by default assumed to be capture-free with respect to abstraction and quantification.
\end{definition}

The operational semantics of object terms are given by the usual $\beta$-reduction rules and computation rules for the recursor constants:
\begin{equation*}
\begin{array}{ll}
\begin{array}{lll}
\pair{s,t}\pl &\reduces& s\\
\pair{s,t}\pr &\reduces& t\\
(\lam x s)t &\reduces& s\subst{x}{t}
\end{array}
&\quad
\begin{array}{llllll}
\ite\true{t_1}{t_2} &\reduces& t_1 &\quad \RecNat\,0\,s\,t &\reduces& s\\
\ite\false{t_1}{t_2} &\reduces& t_2 &\quad \RecNat\,(\Succ n)\,s\,t &\reduces& t\,n\,(\RecNat\,n\,s\,t)
\end{array}
\end{array}
\end{equation*}
We will make use of the following ``let'' notation for a $\beta$-redex:
\begin{gather*}
 \lett{x:=t}s \qquad := \qquad (\lam x s)t.
\end{gather*}

We express derivations in a natural deduction system with a similar syntax to that of object terms to stress the Curry-Howard correspondence. Proof terms are typed by their conclusion formulas and are built from \emph{assumption variables}.

\begin{definition}
  \emph{Proof terms} $(M,N)$ of $\NA^\omega$ are defined as follows:
\begin{eqnarray*}
 M,N & \bndef &u^A \bnor (\lam {u^A} M^B)^{A\imp B} \bnor (M^{A\imp B}N^A)^B \bnor \\
&\text{\pstar} & (\lam {x^\rho} M^{A(x)})^{\all {x^\rho}A(x)} \; \bnor
(M^{\all {x^\rho} A(x)}t)^{A(t)} \bnor \\
&& \AxTrue: \atom{\true} \bnor 
\IndBool^{A(b)}:\all {b^\bool}\big(A(\true) \imp A(\false) \imp A(b)\big) \bnor\\
&&\IndNat^{A(n)}: \all {n^\nat}\big(A(0) \imp \all {n^\nat}(A(n) \imp A(\Succ n)) \imp A(n)\big)
\end{eqnarray*}
with the usual variable condition \pstar that the object variable $x$ does not occur freely in any of the open assumptions of $M$. The sets of free variables $\FV(M)$ and free (open) assumption variables $\FA(M)$ as well as capture-free substitutions $M\subst{x}{t}$ and $M\subst{u}{N}$ are defined inductively as usual.
\end{definition}

The \emph{truth axiom} $\AxTrue$ defines the logical meaning of $\atom{\cdot}$ and allows us to consider any boolean valued function defined in our term system as a decidable predicate. When we write for example $n=m$, we actually mean $\atom{\mathrm{Eq}\,n\,m}$, where $\mathrm{Eq}^{\nat\To\nat\To\bool}$ is a term defining the decidable equality for natural numbers.

In our negative language defining falsity as $\False := \atom{\false}$ already gives us the full power of classical logic. In particular, for a formula $A$ we can prove \emph{Ex falso quodlibet} ($\efq$): $\yields \False \imp A$ and \emph{Stability}: ${\yields ((A\imp \False)\imp \False) \imp A}$ by meta induction on $A$, using $\AxTrue$ and $\IndBool$ for the base case.
We will thus use the abbreviations $\neg A := A \imp \False$ and $\excl {x^\rho} A := \neg\all {x^\rho} \neg A$.

The term system in consideration is essentially G\"odel's T and the reduction relation $\reduces$ is well-known to be strongly normalising and confluent. Thus, instead of insisting that object terms appearing in formulas of proof rules match exactly, we require them only to have the same $\eta$-long normal form; this equality will be denoted by $\redeq$. Note that we make no such assumption for extracted programs or for proof terms themselves.

{\bf Notation.}
For technical convenience we will use $\nulltype$ for denoting a special \emph{nulltype}, i.e., lack of computational content. By abuse of notation we also use $\nullterm$ to denote all terms of nulltype. We stipulate that the following simplifications are always carried out implicitly:
\begin{equation}\
\newcommand{\simpl}{\rightsquigarrow}
\begin{array}{c@{\;}c}
\begin{array}{r@{\,}l@{\quad}r@{\,}l@{\quad}r@{\,}l}
 \rho\tprod\ve &\simpl \rho, & t^{\rho\tprod\ve}\pl &\simpl t,&\pair{t,\ve} &\simpl t,\\
 \ve\tprod\rho &\simpl \rho, & t^{\ve\tprod\rho}\pr &\simpl t,&\pair{\ve,t} &\simpl t,\\
&&&&&
\end{array}
&
\begin{array}{r@{\,}l@{\quad}r@{\,}l@{\quad}r@{\,}l}
 \rho\To\ve &\simpl \ve,&            \lam x\ve &\simpl \ve, & \ve t &\simpl \ve\\
 \ve\To\rho &\simpl \rho, &           \lam {x^\ve}t &\simpl t,& t\ve &\simpl t\\
 & &           \all {x^\ve} A &\simpl A, &M\ve &\simpl M
\end{array}
\end{array}
\tag{$\ve$} \label{eq:epsilon}
\end{equation}
Consequently, to simplify presentation all $\nullterm$ terms will be silently omitted, as they hold no computational content.

\section{Quasi-linear Dialectica interpretation}

\begin{linextr}
We will shortly outline the variant of the Dialectica interpretation, which was presented in \cite{quasilin}. It allows extraction of more efficient programs by avoiding syntactic repetition of subterms. In particular, it turns out that the size of the extracted terms depends almost linearly on the proof size. The present paper will build upon this interpretation to improve efficiency of recursion even further.

The general idea behind avoidance of syntactic repetition is to factor out common subterms as much as possible in the positive and negative content during the extraction process. To achieve this we use \emph{definition contexts} --- a tool, which allows to gradually accumulate the common part of all witnesses of a given proof. In order to apply this technique, the definition of the Dialectica computational types needs to be slightly revised so that we use uncurried function types instead of curried ones, because both the partial and the full application of an uncurried function to a variable increase the term size with a constant\footnote{In contrast, full application of a curried function needs a variable for each parameter.}.

We start with some preliminary notations. We use $\size\cdot$ to denote size of terms, formulas and proofs. For a proof $M$ we define its \emph{maximal sequent length} $\msl M$ as $\max_{N\leq M} \abs{\FA(N)}$,  where $N\leq M$ is the subproof relation. The rest of the needed definitions are presented below.

\begin{definition}
  Let us fix a type variable $\holetype$ and an object variable $\hole$ of type $\holetype$, which will be referred to as ``a hole''. A \emph{definition context} $E$ is a term built by the following rules:
\begin{align*}
  E\bndef \hole^\holetype \bnor (E^{\rho\To\sigma}t^\rho)^\sigma \bnor (\lam {x^\rho} E^\sigma)^{\rho\To\sigma},
\end{align*}
where $t$ does not contain the type $\holetype$. For a definition context $E^\rho$ and term $t^\sigma$, we define the term $\context Et$ ($t$ in the context $E$) as $E\subst\holetype\sigma\subst\hole t$, where, contrary to our usual convention, the free variables of $t$ are allowed to be bound by abstractions in $E$.
\end{definition}

\begin{definition}
We define the partial application of the (uncurried) function $f$ to the term $t$ as
\begin{align*}
 \extapply{f^{\rho\To\tau}}{t^\rho} &:= ft\\
 \extapply{f^{\rho\Times\sigma\To\tau}}{t^\rho} &:= \lam {x^\sigma} f\pair{t,x}, \text{ where $x$ is a fresh variable.}
\end{align*}
\end{definition}

\begin{definition}
We extend the projection operations $\pl$ and $\pr$ to functions:
\begin{gather*}
 f^{\rho\To\sigma\Times\tau}\pl := \lam {x^\rho} fx\pl,\qquad f^{\rho\To\sigma\Times\tau}\pr := \lam {x^\rho} fx\pr.
\end{gather*}
\end{definition}

\begin{definition}
 For a formula $A$ we define the positive and negative computational types ($\typep A$ and $\typen A$). We will also denote $\typea A := \typen A \To \typep A$.
We define:
\begin{align*}
 &\typep {\atom b} := \nulltype,&&\typen{\atom b} := \nulltype,\\
 &\typep{A\imp B} := \typep B \Times \typen A,&&\typen{A\imp B} := \typea A \Times \typen B,\\
 &\typep{\all {x^\rho} A} := \typep A,&&\typen{\all {x^\rho} B} := \rho \Times \typen B.
\end{align*}
\end{definition}

\begin{definition}
 For $r:\typea A$, $s:\typen A$ we define $\intd Ars$ as follows:
 \begin{align*}
  \intd{\atom b}{}{} &:= \atom b, \qquad     \intd{\all x A}rs := \intd {A(s\pl)}{\extapply r{s\pl}}{s\pr},\\
  \intd{A\imp B}rs &:= \intd A{s\pl}{rs\pr} \imp \intd B{(\extapply r{s\pl})\pl}{s\pr}.\\
 \end{align*}
\end{definition}

The soundness theorem for the new variant of the interpretation follows a similar pattern to the usual soundness proof. On every inductive step we define:
\begin{enumerate}
 \item a definition context $\et M:\typen A\To\holetype$
 \item a context-dependent positive witnessing term $\etdp M:\typep A$
 \item context-dependent negative witnessing terms $\etdn M_i:\typen {C_i}$
\end{enumerate}
The final extracted term will be obtained by putting the context-dependent terms inside the context:
\begin{gather*}
 \etds M := \context{\et M}{\pair{\etdp M, \ldots, \etdn M_i,\ldots}}.
\end{gather*}
We will refer to the separate components put in the context as follows:
\begin{align*}
 \etdsp M := \context{\et M}{\etdp M}, \qquad \etdsn M_i := \context{\et M}{\etdn M_i}.
\end{align*}

\begin{theorem}[Soundness of quasi-linear extraction]\label{thm:quasilin}
   Let $\PP:A$ be a proof in $\NA^\omega$ from assumptions $u_i:C_i$. Let $x_i:\typea{C_i}$ and $y_A:\typen A$ be fresh variables. Then there is a term $\etds \PP$, satisfying the following conditions:
\begin{enumerate}
\enumr
 \item we can prove $\intd A {\etdsp \PP}{y_A}$ from $\intd {C_i} {x_i} {\etdsn \PP_i{y_A}}$,
 \item $\FV(\etds \PP) \subseteq \FV(\PP) \cup \set{x_i}$,
 \item $\size{\etds \PP} \leq K(\size \PP + {\msl \PP}^2)$ for a fixed constant $K$, not depending on $\PP$.
\end{enumerate}
\end{theorem}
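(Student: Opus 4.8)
The plan is to proceed by induction on the structure of the proof $\PP$, simultaneously constructing the definition context $\et\PP$, the positive witness $\etdp\PP$, and the negative witnesses $\etdn\PP_i$ for each open assumption, and verifying the three conditions at each step. For the base cases ($\AxTrue$, $\IndBool$, $\IndNat$, and assumption variables $u_i:C_i$) the witnessing terms are given directly: for an assumption $u_i:C_i$ the context is the trivial hole, the negative witness for $u_i$ is $y_A$ (after the appropriate type coercion), and all other components are $\nullterm$; for the induction axioms one reads off the witnesses from the standard Dialectica interpretation of induction, using the recursor $\RecNat$ (resp.\ $\Cas$) to build the positive witness and the same recursor to accumulate the negative content. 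In each base case conditions (ii) and (iii) are immediate since the terms have size bounded by a constant times $\size\PP$.

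For the inductive cases I would treat the four proof-term constructors: implication introduction $\lam{u^A}M$, implication elimination $MN$, universal introduction $\lam{x^\rho}M$, and universal elimination $Mt$. The key device is that contexts compose: given contexts $\et M$ and $\et N$ from the subproofs, one forms a combined context that first binds the common subterms of $\et M$, then those of $\et N$ (suitably renamed), and the context-dependent witnesses are built inside this merged context using partial application $\extapply{\cdot}{\cdot}$ and the extended projections $\pl,\pr$ on functions so that every application to a variable costs only a constant in size. For implication elimination one must also perform the \emph{contraction} step — assumptions common to $M$ and $N$ have their two negative witnesses combined — and here the definition context is exactly what lets us name the shared positive content of the two candidates once rather than duplicating it. Condition (i) at each step is the routine Dialectica verification: unfold $\intd{\cdot}{}{}$ using its defining clauses and the definitions of $\etdsp$, $\etdsn$, feed the induction hypotheses the right instantiations of their negative variables, and chase the implication. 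Condition (ii) follows since every newly introduced free variable is among the $x_i$ or already free in $\PP$.

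The main obstacle is condition (iii), the almost-linear size bound $\size{\etds\PP}\le K(\size\PP+\msl\PP^2)$. The danger is the contraction step in implication elimination: naively combining negative witnesses for shared assumptions could multiply sizes. The plan is to show that each proof rule adds only $O(\msl\PP)$ to the total size of the context-dependent data — the $O(\msl\PP)$ rather than $O(1)$ arising because contraction touches one component for each of up to $\msl\PP$ shared assumptions, and each such touch costs a constant thanks to uncurried function types and partial application. Summing over the $O(\size\PP)$ rules gives $O(\size\PP\cdot\msl\PP)$ for the context-dependent terms; the final substitution into the context $\et\PP$ can at most square the sequent-length contribution (each hole-filling pairs together $\msl\PP$-many components, and the context itself was built up over $O(\size\PP)$ steps), which is where the $\msl\PP^2$ term enters. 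Making this accounting precise — choosing the right size measure on the triples $(\et\PP,\etdp\PP,(\etdn\PP_i)_i)$ and proving the per-step increment bound by a careful case analysis — is the technical heart of the argument. Everything else is bookkeeping with the $\ve$-simplifications of \eqref{eq:epsilon} ensuring that nulltype components never contribute.
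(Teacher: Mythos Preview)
The paper does not actually prove Theorem~\ref{thm:quasilin}; it is quoted from \cite{quasilin} and stated here without argument. What the paper does prove is Theorem~\ref{thm:countmark}, explicitly described as ``a modification of the argument needed for Theorem~\ref{thm:quasilin}'', and from that proof one can read off the intended structure: induction on $\PP$, building $\et\PP$, $\etdp\PP$, $(\etdn\PP_i)_i$ simultaneously. Your overall plan matches this.

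One structural difference is worth flagging. You treat $\IndBool$ and $\IndNat$ as base cases with closed realizers read off from the standard interpretation. The paper (both in Section~\ref{subsec:countercheck} and in the proof of Theorem~\ref{thm:countmark}) instead handles only the \emph{applied} forms $\Ind_{\bool}\,b\,M\,N$ and $\Ind_{\nat}\,n\,M\,(\lam{n,u}N)$ as inductive cases, invoking the induction hypothesis on the subproofs $M,N$ and assembling the negative content for the open assumptions $u_i:C_i$ via a recursor whose step contains the contraction $\contract{u_i}{}{}$. Your decomposition is a legitimate alternative, but be careful with condition~(iii): the closed Dialectica realizer of $\IndNat^{A(n)}$ contains, inside the recursor, a case distinction on $\intd{A(m)}{\cdot}{\cdot}$ whose size depends on $A$, so ``bounded by a constant times $\size\PP$'' is not automatic unless you also bind such decision terms as context variables (cf.\ the treatment of $T_C$ at the end of the proof of Lemma~\ref{lem:markcontract}). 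As for the $\msl\PP^2$ accounting, the present paper gives no details either, deferring entirely to \cite{quasilin}, so your looseness on that point is no worse than what is available here.
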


\section{A special case of recursion}
\label{subsec:countercheck}

A specific feature of the Dialectica interpretation which allows to embed classical logic into a quantifier-free constructive system is the extraction of counterexamples. In our negative language, in order to prove $\excl x A$ we need to use the assumption $\all x \neg A$ to derive a contradiction. The non-trivial use of classical logic comes where we use this assumption more than once. In the extracted term this corresponds to deciding between counterexamples by checking the validity of the quantifier-free translation $\intd A x y$. An extreme example of this phenomenon is the interpretation of induction\footnote{Here we refer to the full induction rule, not to the commonly considered assumptionless induction rule.}, which corresponds to using the induction hypothesis an unbounded number of times. This is reflected by a case distinction on every recursive step in the recursively defined programs for computing counterexamples for open assumptions. However, there is a special case of the induction scheme in which a case distinction on every step is redundant and, moreover, can lead to an unnecessary increase of complexity.

Let $\PP := \Ind_{\nat,A(n)}\,n\,M^{A(0)}\,(\lam {n,v^{A(n)}} N^{A(n+1)})$ be a proof by induction from assumptions $u_i:C_i$. Consider the case where $\typen A = \ve$. For the sake of simplicity let us assume that we have only one open assumption $u:C$ and let us omit all indices. By the usual soundness theorem (cf.~\cite{iph}) we obtain the following extracted terms:
\begin{eqnarray*}
 \etdp\PP &:=& \RecNat\,n\,\etdp{M}\,(\lam {n,x_v} \etdp{N})\\
 \etdn\PP &:=& \RecNat\,n\,\etdn{M} \left(\lam {n,p}\;
\contract{u}{(\etdn{N}\xi)}{p}\right),\quad\text{ for }\xi := \subst{x_v}{\etdp\PP},
\end{eqnarray*}
where $\contract{}{}{}$ is a case distinction operator, defined as follows:
\begin{align*}
\contract{u}{t_1}{t_2} := 
\begin{cases}
 t_1,&\text{ if }u:C\notin \FA(N),\\
 t_2,&\text{ if }u:C\notin \FA(M),\\
\ite{\intd{C}{x_u}{t_1}}{t_2}{t_1},&\text{ otherwise.}
\end{cases}
\end{align*}
Note that the case distinction operator depends not only on the assumption $u$, but on the proof branches $M$ and $N$ as well. To keep notation simpler we do not make this dependency explicit; the referred proof will be clear from the context.

We first notice that the computation of $\etdp\PP$ is linear on $n$. However, in each recursive step of $\etdn\PP$ in order to compute $\etdn N$, we invoke a sub-computation of $\etdp\PP$ for the current value of $n$. This makes the computation of $\etdn\PP$ at least quadratic on $n$. In the general case of treatment of induction this cannot be avoided. However, in the case where $\typen A = \ve$ it is easy to see that we can compute positive and negative content simultaneously:
\begin{align*}
 \pair{\etdp\PP,\etdn\PP} &:= \RecNat\,n\,\pair{\etdp M, \etdn M}\\
 &\qquad\qquad \left(\lam {n,x_v,\pair{p_+,p_-}} \pair{\etdp{N}p_+, \contract{u}{(\etdn{N}\xi')}{p_-}}\right),
\end{align*}
where $\xi' := \subst{x_v}{p_+}$. Thus, as shown in \cite{quasilin}, by avoiding recomputation we might improve worst time complexity of the program. However, in this special case we can optimise even further. For a fixed $n$, $\etdn\PP$ can be seen as performing a linear search for a counterexample for $C$ among the $n$ candidates in the list $L^n := (\etdn{M}, (\etdn{N}\xi'\subst nk)_{k<n-1})$. Formally,
\begin{gather*}
\intd Cx{\etdn\PP} \leqv \bigwedge_{k<n}\intd Cx{L^n_k}\quad \text{ and }\quad \ex {K<n}\,\etdn\PP = L^n_K.
\end{gather*}

The definition of $\contract{}{}{}$ is asymmetric: it performs the counterexample check on one of its operands only (cf. \cite{schwichtwfi}). In the considered case, $\etdn\PP$ always returns the \emph{last possible counterexample} in the list $L^n$, i.e., $\all {k>K} \intd Cx{L^n_k}$. This behaviour seems inefficient and a simple idea to change it is to reverse the operands of $\contract{}{}{}$ in the definition of $\etdn\PP$. Indeed, in this case we would return the \emph{first possible counterexample} from $L^n$, i.e., $\all {k<K} \intd Cx{L^n_k}$. Unfortunately, this will not improve the efficiency of the extracted program, because it will still perform $n$ steps, computing all elements from the list $L^n$ and performing $n-1$ case distinctions. It is clearly sufficient to terminate the recursion as soon as we find the first index $K$ for which $\neg \intd Cx{L^n_K}$. Although this will not change the \emph{worst time complexity} of the program, it might improve the \emph{average time complexity}, in case the expected value of $K$ is lower than $O(n)$.

Such an earlier terminating search could be implemented by adding a boolean flag $b$, which specifies whether a counterexample is already found. For example:
\newcommand{\checkcontract}[4]{#3\,{\stackrel{#1,#2}{\ltimes}}\,#4}
\begin{align*}
 \pair{\etdp\PP,\pair{\etdn\PP,b}} &:= \RecNat\,n\,\pair{\etdp M, \pair{\etdn M,\false}}\\
 & \left(\lam {n,x_v,\pair{p_+,\pair{p_-,b}}} \pair{\etdp{N}p_+, \;\checkcontract{b}{u}{p_-}{\etdn{N}\xi'}}\right),\\
 \text{where }\checkcontract{b}{u}{t_1}{t_2}&:= \ite b {\pair{t_1,\true}} {\big(\ite {\intd Cx{t_1}}{\pair{t_2,\false}}{\pair{t_1,\true}}\big)}.
\end{align*}

Note that the assumption $\typen A = \ve$ is important, otherwise $p_-$ would be a function, applied to a term depending on $n$ on each recursive step. This would prevent us from using the information that a counterexample is found on an earlier step to terminate the recursion.

In \cite{iph} Ratiu and the author considered the Infinite Pigeonhole Principle as a case study for program extraction from non-constructive proofs. There we showed that the refined $A$-translation method \cite{bbs} extracts a program, which has exponential worst time complexity, but polynomial average time complexity, while the program extracted by Dialectica is exponential in both the worst and the average case.  However, the optimisation described in this section applies and we can obtain a Dialectica program, which has polynomial average time complexity, like the one extracted by refined $A$-translation.

The considered early termination of the recursive process is very reminiscent of an abortive control operator \cite{felleisen}, where immediate transfer of the program flow control occurs. As discussed in \cite{iph}, similar situations occur with programs, extracted by refined $A$-translation. For the case study discussed there, it seemed that this feature had an important contribution in achieving better average time complexity. This suggestion is reaffirmed by the fact that adding such an optimisation to the Dialectica interpretation has the same favourable effect on extracted programs.

\section{Marked counterexamples}
\begin{markcounter}
As was discussed in Section \ref{subsec:countercheck}, the programs extracted with the original Dialectica interpretation do not take advantage of the information about the validity of counterexample where a case distinction is needed. The case distinction construction $\contract{u}{t_1}{t_2}$ forces us to choose between two candidate counterexamples $t_1$ and $t_2$ for the assumption $u:C$. The choice is made by direct checking of the decidable Dialectica translation of the formula $C$ for one of the counterexamples. What is not taken into account is that if the check confirms the existence of a counterexample, all further computation of witnesses and counterexamples is pointless. In a certain sense, this can be viewed as avoiding both
\begin{enumerate}
 \item \emph{recomputation} --- the validity of the counterexample is rechecked if we have more than two occurrences of the assumption $C$,
 \item \emph{redundant computation} --- all further counterexamples and witnesses computed are not needed for a sound verification proof.
\end{enumerate}

It is important to note that the common context approach from \cite{quasilin} seems inapplicable for avoiding such kind of recomputation. The reason is the underlying difference between repeated subterms and the recomputation considered here. We can detect duplicated terms during the extraction process and we use a shared context to avoid it. However, the counterexample decision happens during the evaluation of the program and, depending on the input parameters, recomputation might or might not occur. Attempting to use a shared context would imply precomputation of all possible case distinctions, which could be much worse than recomputing only one case distinction.

We will thus follow a different idea. As was already hinted in Section \ref{subsec:countercheck}, an additional marker will be attached to each extracted counterexample, carrying information about its validity. We will use $\marktype$ with three constants: $\true, \false$ and $\bot$, as a type for markers. 
\begin{definition}
 For a formula $A$ we will re-define the positive and negative computational types denoting the new variants as $\typep A$ and $\typen A$. We will also denote $\typea A := \typen A \To \typep A$ and $\typem A := \marktype \Times \typen A$.
We define:
\begin{align*}
 &\typep {\atom b} := \ve,&&\typen{\atom b} := \ve,\\
 &\typep{A\imp B} := \typep B \Times \typem A,&&\typen{A\imp B} := \typea A \Times \typen B\\
 &\typep{\all {x^\sigma} A} := \typep A,&&\typen{\all {x^\sigma} B} := \sigma \Times \typen B
\end{align*}
\end{definition}
%We will define a \emph{marked negative computational type} $\typem A := \marktype \Times \typen A$.
For clarity $\markby t m$ will denote that $t$ is marked by $m$. Consequently, when we write $\markby t m \redeq s$, we will mean that $m \redeq s\pl$ and $t\redeq s\pr$. The marker constants have the following intended meaning:
\begin{itemize}
 \item $\markby t \bot$ --- we have no information yet about the validity of $\intd {C_i}{x_i}t$,
 \item $\markby t \false$ --- we have checked that $\neg \intd {C_i}{x_i}t$,
 \item $\markby t \true$ --- $t$ is an arbitrarily chosen term and we should prefer another candidate counterexample without the need to check $\intd {C_i}{x_i}t$.
\end{itemize}
The change in the positive type in the implication case of the translation leads to a slight adjustment to the Dialectica translation (emphasized by a box below):
\begin{align*}
 \intd{A\imp B}fz := \intd A{z\pl}{fz\pr\;\text{\fbox{$\pr$}}} \imp \intd B{(\extapply f{z\pl})\pl}{z\pr}.
\end{align*}
The essential use of the marker comes in the definition of case distinction terms.
\begin{lemma}
\label{lem:markcontract}
  For every formula $C$ and variable $x:\typea C$ there is a term $\contractterm C:\typem C \To\typem C \To \typem C$ with $\FV(\contractterm C)\subseteq \FV(C) \cup \set x$, such that for $t_1,t_2 : \typem C$ from the assumptions $u_i:(m_i = \false \imp \neg\intd Cx{s_i})$ we can prove
 \begin{enumerate}
  \item[$A_i:$] $(m\neq \true \imp \intd Cxs) \imp (m_i \neq\true \imp \intd Cx{s_i})$,
  \item[$B:$] $m = \false \imp \neg\intd Cxs$,
 \end{enumerate}
where $\markby{s_i}{m_i} \redeq t_i$ and $\markby sm \redeq \contractterm Ct_1t_2$.
\end{lemma}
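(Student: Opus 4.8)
The plan is to define $\contractterm C$ by a case distinction on the marker $m_1$ of the first candidate, mirroring the structure of the boolean-flag operator $\checkcontract{}{}{}{}$ sketched in Section~\ref{subsec:countercheck} but now using the three-valued marker. Concretely I would set
\[
 \contractterm C\,t_1\,t_2 := \ite{m_1 = \bot}{\big(\ite{\intd Cx{s_1}}{\markby{s_2}{m_2}}{\markby{s_1}{\false}}\big)}{\markby{s_1}{\true}},
\]
where $\markby{s_i}{m_i}\redeq t_i$, suitably adjusted by the $\FA$-based cases of the old $\contract{}{}{}$ (if $u$ does not occur in the proof branch producing $t_2$ we just return $t_1$ unchanged, etc.) so that the variable condition $\FV(\contractterm C)\subseteq\FV(C)\cup\set x$ is met; note $\intd Cx{\cdot}$ has free variables in $\FV(C)\cup\set x$, so this is automatic. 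The reading is: if $t_1$ carries no information yet ($m_1=\bot$), check $\intd Cx{s_1}$ — if it fails, $s_1$ is a genuine counterexample and we mark it $\false$; if it holds, $s_1$ is useless so we pass on $t_2$ untouched. If $m_1\neq\bot$ we have already committed (either $s_1$ is known to be a counterexample, or it was an arbitrary $\true$-marked choice), so we keep $s_1$ but downgrade its marker to $\true$, recording that we made an arbitrary choice and a better candidate should be sought later.

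Next I would verify the two proof obligations by unwinding this definition along the three values of $m_1$. For $B$, i.e. $m=\false\imp\neg\intd Cxs$: the output marker $m$ equals $\false$ only in the middle branch, where $m_1=\bot$ and $\intd Cx{s_1}$ evaluated to $\false$; there $s=s_1$, and from $\AxTrue$ applied to the negated decidable atom we get $\neg\intd Cx{s_1}$ directly, using that $\intd Cx{\cdot}$ is a boolean-valued (hence decidable) predicate. For $A_i$, I would split on which branch fires. When $m_1=\bot$ and $\intd Cx{s_1}=\true$ we have $\markby sm\redeq t_2=\markby{s_2}{m_2}$, so $A_1$ reads $(m_2\neq\true\imp\intd Cx{s_2})\imp(\bot\neq\true\imp\intd Cx{s_1})$, and the conclusion's hypothesis $\bot\neq\true$ is trivially available while $\intd Cx{s_1}$ is exactly what we just checked; $A_2$ is then the identity implication. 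When $m_1=\bot$ and $\intd Cx{s_1}=\false$, we have $\markby sm\redeq\markby{s_1}{\false}$; $A_1$'s hypothesis $m\neq\true$ holds since $m=\false$, and its conclusion $m_1\neq\true\imp\intd Cx{s_1}$ follows because $m_1=\bot\neq\true$ forces us to discharge $\intd Cx{s_1}$ — which here is $\false$ — so we instead appeal to the incoming assumption $u_1: m_1=\false\imp\neg\intd Cx{s_1}$... actually at this point $m_1=\bot\neq\false$, so $u_1$ gives nothing, and we must use $\efq$ from $\neg\intd Cx{s_1}$ against the (impossible) demand for $\intd Cx{s_1}$; a cleaner route is to observe the conclusion $(m_1\neq\true\imp\intd Cx{s_1})$ is only needed when $\intd Cx{s_1}$ holds, contradiction, so $\efq$ closes it. The remaining case $m_1\neq\bot$ gives $\markby sm\redeq\markby{s_1}{\true}$, so $m=\true$ and both $A_i$ become vacuous (their conclusion's hypothesis $m_i\neq\true$ can be dispatched: for $i=1$, $m_1\neq\bot$ and $m_1\neq\true$ leaves $m_1=\false$, whence $u_1$ supplies $\neg\intd Cx{s_1}$ and $\efq$ finishes; for $i=2$ the premise $A_2$ already has the needed shape).

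The main obstacle is bookkeeping rather than depth: keeping the three-valued case analysis straight while threading through the $\FA$-driven degenerate cases (where one operand is ignored), and making sure the $\redeq$ conventions for $\markby{}{}$ interact correctly with the reduction rules for $\Cas$ so that the claimed equalities $\markby sm\redeq\contractterm Ct_1t_2$ hold definitionally on each branch. I expect the $\efq$/stability appeals in the "impossible" sub-branches of $A_i$ to be the fiddly part — one has to be careful that the hypotheses genuinely contradict rather than merely look odd — but each such case reduces to $\AxTrue$ plus $\efq$, both available by meta-induction as noted after the definition of $\NA^\omega$. No induction on $C$ is needed: the lemma is proved uniformly, with $\Cas$ ($=\IndBool$) doing all the work at the level of markers, exactly as the old $\contract{}{}{}$ used a single $\ite{}{}{}$.
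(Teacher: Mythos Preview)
Your definition of $\contractterm C$ is wrong in the branch $m_1 \neq \bot$, and the error is not bookkeeping but conceptual. There you return $\markby{s_1}{\true}$ regardless of whether $m_1 = \true$ or $m_1 = \false$. When $m_1 = \false$, the assumption $u_1$ tells you $s_1$ is a \emph{certified} counterexample, yet you overwrite its marker with $\true$ (``arbitrary, prefer another''), destroying exactly the information the marker system is designed to carry. Concretely, $A_1$ then fails: with $m = \true$ the premise $(m \neq \true \imp \intd Cxs)$ is vacuously true, so $A_1$ reduces to proving $(m_1 \neq \true \imp \intd Cx{s_1})$ outright; since $m_1 = \false \neq \true$ you must establish $\intd Cx{s_1}$, while $u_1$ gives $\neg\intd Cx{s_1}$. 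Your claimed closure ``$u_1$ supplies $\neg\intd Cx{s_1}$ and $\efq$ finishes'' is circular: $\efq$ needs $\False$, and from $\neg\intd Cx{s_1}$ you would need $\intd Cx{s_1}$---the very goal---to obtain it. The same failure hits $A_2$ in this branch whenever $m_2 \neq \true$ (e.g.\ $m_2 = \bot$), since the vacuous premise gives you nothing about $s_2$ either.

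The paper's $\contractterm C$ avoids this by never erasing a $\false$ marker and by inspecting \emph{both} markers: it returns $t_1$ when $m_2 = \true$ or $m_1 = \false$, returns $t_2$ when $m_1 = \true$ or $m_2 = \false$, and only in the residual case $m_1 = m_2 = \bot$ tests $\intd Cx{s_1}$, outputting $t_2$ if it holds and $\markby{s_1}{\false}$ otherwise. The verification then rests on three easy observations: if the output equals $t_i$ then $A_i$ is an identity and $B$ is literally $u_i$; if $m_i = \true$ then $A_i$ is vacuous; and if the output equals $t_i$ with $m_i = \false$ then $u_i$ together with the premise of $A_j$ yields a genuine contradiction, so $\efq$ applies legitimately. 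Your two-branch term cannot be patched without at least distinguishing $m_1 = \true$ from $m_1 = \false$ and, in the latter case, passing $t_1$ through unchanged. (A side remark: the $\FA$-based shortcuts you mention belong to the operator $\contract{u_i}{}{}$ defined inside the soundness theorem, not to $\contractterm C$; the free-variable condition on $\contractterm C$ holds directly because $\intd Cx{\cdot}$ is computed by a closed term $T_C$ with $\FV(T_C) \subseteq \FV(C)$.)
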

\begin{proof}
  Define
\begin{align*}
  \contractterm C(\markby{s_1}{m_1})(\markby{s_2}{m_2}) &:= 
%  \lett{x_1:=t_1}{\lett{x_2:=t_2}{}}\\
 % &\quad\;\;\lett{m_1:=x_1\pl}{\lett{s_1:=x_1\pr}{}}\\
 % &\quad\;\;\lett{m_2:=x_2\pl}{\lett{s_2:=x_2\pr}{}}\\
  %&\quad\;\;
  \ite{\big(T_\vee(m_2 = \true)(m_1 = \false)\big)}{t_1}\\
  &\quad\;\;\ite{\big(T_\vee(m_1 = \true)(T_\vee(m_2 = \false)(T_Cxs_1))\big)}{t_2}{(\markby{s_1}\false)},
\end{align*}
where $T_\vee := \lam {x,y} \ite x\true y$ and $T_C$ is such that $\atom{T_C x y} \leqv \intd C x y$. It is clear that
\begin{align*}
 \atom{T_\vee x y} \leqv (\neg \atom x \imp \atom y) \leqv (\neg \atom y \imp \atom x).
\end{align*}

Let us denote $D_i := (\contractterm C{t_1}{t_2} = t_i)$. It is easy to see that
\begin{itemize}
 \item $D_i$ immediately implies $A_i$ and also $B$ by $u_i$,
 \item $m_i = \true$ immediately implies $A_i$
 \item $D_i \land (m_i = \false)$ implies $\neg \intd Cxs$ by $u_i$, which contradicts with the premise $m \neq \true \imp \intd Cxs$, implying both $A_1$ and $A_2$.
\end{itemize}
Table \ref{table:cases} summarizes the validity of $D_i$ depending on the values of $m_1$ and $m_2$. It can be checked that the arguments above are sufficient to establish the validity of $A_1, A_2$ and $B$ in all cases except the one marked by '?'. In order to complete the proof we assume that $m_1 = m_2 = \bot$ and consider cases on the decidable formula $\intd Cx{s_1}$.
\begin{table}
\begin{gather*}
 \begin{array}{cc|c|c|c|}
  \cline{3-5}
  &&\multicolumn{3}{|c|}{m_1} \\
  \cline{3-5}
   && \true & \false & \bot\\
  \hline
  \multicolumn{1}{|c|}{\text{\multirow{3}{*}{$m_2$}}} & \true & D_1 & D_2 & D_2\\
  \cline{2-5}
  \multicolumn{1}{|c|}{}&\false & D_1 & D_1 & D_1\\
  \cline{2-5}
  \multicolumn{1}{|c|}{}&\bot & D_1 & D_2 & ?\\
  \hline
 \end{array}
\end{gather*}
\caption{Case analysis for $\contractterm C{t_1}{t_2}$}\label{table:cases}
\end{table}

\emph{Case $\intd Cx{s_1}$}. $D_2$ holds, thus we have only $A_1$ to prove. However, the conclusion of $A_1$ is exactly what we assumed in this case.

\emph{Case $\neg \intd Cx{s_1}$}. We have $s = s_1$, which implies that $\neg \intd Cxs$, proving $B$. On the other hand, $m = \false$, which contradicts with the premise $m \neq \true \imp \intd Cxs$, implying both $A_1$ and $A_2$.

Finally, assuming that $T_C$ is a variable bound by an external definition context as in \cite{quasilin}, we see that $\size{\contractterm{C}{t_1}{t_2}} \leq \size{t_1} + \size{t_2} + K$ for some constant $K$, not dependent on $C$, $t_1$ or $t_2$.
\end{proof}

We will prove soundness for the modified variant of the Dialectica interpretation using the constructions $\et M, \etdp M, \etdn M$ from Theorem \ref{thm:quasilin}. Their types will be as follows:
\begin{gather*}
 \et M:\typen A \imp \holetype,\qquad \etdp M:\typep A, \qquad \etdn M_i:\typem {C_i}.
\end{gather*}
%We will continue using the notations $\etds M, \etdsp M, \etdsn M$ with the same definitions as in the previous section.

\begin{theorem}[Soundness of counterexample marking]\label{thm:countmark}
   Let $\PP:A$ be a proof from assumptions $u_i:C_i$. Let $x_i:\typea{C_i}$ and $y_A:\typen A$ be fresh variables. Then there is a term $\etds \PP$, satisfying the following conditions:
\begin{enumerate}
\enumr
 \item we can prove $\intd A {\etdsp \PP}{y_A}$ from $(m_i \neq \true \imp \intd {C_i} {x_i} {s_i})$,
 \item we can prove $m_i = \false \imp \neg \intd {C_i} {x_i} {s_i}$,
 \item $\FV(\etds \PP) \subseteq \FV(\PP) \cup \set{x_i}$,
 \item $\size{\etds \PP} = K(\size \PP + {\msl \PP}^2)$ for a fixed constant $K$, not depending on $\PP$,
\end{enumerate}
where $\markby{s_i}{m_i} \redeq \etdsn\PP_i y_A$.
\end{theorem}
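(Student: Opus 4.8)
The plan is to run the same structural induction on $\PP$ that underlies Theorem~\ref{thm:quasilin}, reusing the definition context $\et\PP$, the positive witness $\etdp\PP$ and the negative witnesses $\etdn\PP_i$ almost verbatim, and only adjusting for the three places where the new computational types differ from the old ones. First, every negative witness $\etdn\PP_i:\typem{C_i}$ now carries a marker: a genuine counterexample produced by a single use of an assumption --- in particular the conclusion's counterexample variable in the assumption case $\PP=u^A$ --- is tagged with $\bot$, whereas a ``garbage'' witness handed back for an assumption discharged unused at an implication introduction is tagged with $\true$. Second, wherever the quasi-linear construction merges two candidate counterexamples for one and the same assumption --- i.e.\ at implication elimination $MN$ and at the step term of the induction axiom $\IndNat$ --- the unmarked operator $\contract u{t_1}{t_2}$ is replaced by the term $\contractterm C t_1 t_2$ provided by Lemma~\ref{lem:markcontract}. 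Third, in the positive witness for implication elimination the extra marker component of $\typep{A\imp B}=\typep B\Times\typem A$ forces one further projection, which is exactly the boxed $\pr$ in the adjusted translation $\intd{A\imp B}fz$.

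I would then prove clauses (i) and (ii) simultaneously by induction on $\PP$. In the case of $\AxTrue$ there is nothing to compute, and in the assumption case $\PP=u^A$ the single counterexample is tagged with $\bot$, so (ii) is vacuous and (i) collapses to the corresponding statement of Theorem~\ref{thm:quasilin} (since $\bot\neq\true$). The induction axioms $\IndBool$ and $\IndNat$ are interpreted exactly as in the quasi-linear proof, the recursors $\Cas$ and $\RecNat$ merely threading the markers along untouched, with $\contractterm{}$ replacing $\contract{}{}{}$ at the $\IndNat$ step; the verification for $\IndNat$ still proceeds by the usual inner induction on the recursion argument. Implication introduction, universal introduction and universal elimination only rearrange and relabel components, so (i) and (ii) are inherited from the induction hypothesis; for an unused discharged assumption the $\true$ tag makes both clauses vacuous, which is precisely what allows the garbage witness to carry no proof obligation.

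The core of the argument is the contraction case. At $MN$ (and, via the inner induction on $n$, at the $\IndNat$ step) an assumption $u:C$ may occur in both subproofs, and the induction hypotheses supply, for the two marked candidates $t_1=\markby{s_1}{m_1}$ and $t_2=\markby{s_2}{m_2}$ coming from the two branches, exactly what Lemma~\ref{lem:markcontract} asks for: the branch-wise instances of clause (ii) are the lemma's premises $u_j:(m_j=\false\imp\neg\intd Cx{s_j})$. Applying Lemma~\ref{lem:markcontract} to $\contractterm C t_1 t_2$, its conclusion $B$ is clause (ii) for the merged counterexample, while its conclusions $A_1,A_2$ convert the combined instance of clause (i) back into the two branch-wise instances, which is what the induction hypotheses for $M$ and $N$ consume in order to deliver clause (i) for $\PP$; the lemma's equation $\markby sm\redeq\contractterm C t_1 t_2$ supplies the marker of the combined counterexample. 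I expect this to be the one real obstacle: matching up, rule by rule, the branch-wise instances of (i)/(ii) with the premises and conclusions of Lemma~\ref{lem:markcontract}, and keeping the marker discipline globally consistent --- which occurrences stay $\bot$, which become $\false$ inside $\contractterm{}$, and which are the $\true$ garbage tags --- especially across the interplay of implication introduction and elimination. Everything else is bookkeeping.

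Finally, clauses (iii) and (iv) come essentially for free. By Lemma~\ref{lem:markcontract} we have $\FV(\contractterm C t_1 t_2)\subseteq\FV(C)\cup\set x$ and $\size{\contractterm C t_1 t_2}\le\size{t_1}+\size{t_2}+K$, exactly as for the unmarked operator, and each extracted proof rule contributes only finitely many marker constants and projections; hence the free-variable bookkeeping and the size estimate of Theorem~\ref{thm:quasilin} carry over unchanged, giving $\FV(\etds\PP)\subseteq\FV(\PP)\cup\set{x_i}$ and $\size{\etds\PP}\le K(\size\PP+{\msl\PP}^2)$ for a suitable constant $K$.
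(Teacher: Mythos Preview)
Your proposal is correct and follows essentially the same route as the paper: structural induction on $\PP$, reusing the quasi-linear constructions with marker $\bot$ at assumption variables, marker $\true$ for vacuously discharged hypotheses at implication introduction, the marker-aware $\contractterm{C}$ of Lemma~\ref{lem:markcontract} wherever contraction occurs, and an inner induction on $n$ for the $\IndNat$ case. One small omission: the $\IndBool$ case also merges counterexamples from its two branches $M^{A(\true)}$ and $N^{A(\false)}$ and therefore needs $\contractterm{C}$ just as implication elimination and the $\IndNat$ step do --- the paper treats this case explicitly --- but your general principle ``wherever the quasi-linear construction merges two candidate counterexamples'' already covers it.
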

\begin{proof}
 The proof is a modification of the argument needed for Theorem \ref{thm:quasilin}.
 
 \emph{Case $u:A$}. We set as before $\et \PP := \lam {y_A} \hole$, $\etdp \PP  := x_u y_A$ and set $\etdn \PP_u := \markby{y_A}\bot$. Then $\etdsp \PP \redeq \lam {y_A} x_u y_A \redeq x_u$ and $\etdsn \PP_i {y_A} \redeq \markby{y_A}\bot$. The assumption premise $m_u \neq \true$ holds, which is enough to conclude that $\intd A{x_u}{y_A}$. On the other hand $m_u \neq \false$, which makes the second condition trivially true. The size bounds and the variable condition also hold as in Theorem \ref{thm:quasilin}.
 
 \emph{Case $\lam {u^B} M^C$}. Let us denote $\markby{s_i}{m_i} \redeq \etdsn M_iy_C$ for $i$ ranging over all assumption variables of $M$, including $u$. By induction hypothesis we have a proof of $\intd C {\etdsp M}{y_C}$ from $m_u \neq \true \imp \intd B{x_u}{s_u}$ and $m_i \neq \true \imp\intd {C_i} {x_i} {s_i}$, as well as proofs of $m_u = \false \imp \neg \intd B{x_u}{s_u}$ and $m_i = \false \imp \neg \intd {C_i}{x_i}{s_i}$. The extracted terms which work for Theorem \ref{thm:quasilin} are still applicable:
 \begin{alignat*}2
  &\et\PP &&:= \lam {y_A}\lett{x_u := y_A\pl}{\et M(y_A\pr)},\\
  &\etdp \PP &&:= \pair{\etdp M, \etdn M_u},\\
  &\etdn\PP_i &&:= \etdn M_i.
 \end{alignat*}
 However, we consider an additional special case: if $u\notin \FA(M)$, then we set $\etdn M_u := \markby{\inhab}{\true}$, where $\inhab$ is an arbitrary term of type $\typen B$.
 Substituting $y_A$ with $\pair{x_u,y_C}$, it suffices to prove
 \begin{enumerate}
 %\enum
  \item\label{enum:first} $\intd B{x_u}{s_u} \imp \intd C{\etdsp M}{y_C}\text{ from }m_i \neq \true \imp \intd {C_i} {x_i} {s_i}$,
  \item\label{enum:second} $m_i = \false \imp \neg \intd {C_i}{x_i}{s_i}$.
 \end{enumerate}
\ref{enum:second} follows directly from our induction hypothesis and for \ref{enum:first} we consider subcases on $m_u$. The case $m_u = \bot$ is proved as in \cite{quasilin}. If $m_u = \false$ then by induction hypothesis we have $\neg \intd B{x_u}{s_u}$ and we can conclude using $\efq$. Finally, if $u\notin \FA(m)$ or $m_u = \true$, then by induction hypothesis we can actually prove $\intd C{\etdsp M}{y_C}$ without using the assumption $m_u \neq \true \imp \intd B{x_u}{s_u}$. Therefore we can conclude by using the same proof with a void implication introduction.

\emph{Case $M^{B\imp A} N^B$}. We define $\et\PP, \etdp \PP$ and $\etdn \PP_i$ essentially as in Theorem \ref{thm:quasilin}, with the only difference that we redefine the case distinction operator $\contract {}{}{}$ used to combine the negative extracted terms of type $\typem {C_i}$ as follows:
\begin{alignat*}2
  &\contract {u_i}{t_1}{t_2} &&:= \begin{cases}
                             t_1,&\text{ if }u_i\notin\FA(N),\\
			      t_2,&\text{ if }u_i\notin\FA(M),\\
                             \contractterm {C_i}t_1t_2,&\text{ otherwise.}
                            \end{cases}
\end{alignat*}
The results from Lemma \ref{lem:markcontract} are sufficient to conclude the proof.

\emph{Cases $\lam {x^\rho} M^B$ and $M^{\all {x^\rho} A} t^\rho$}. The proof of the same case in Theorem \ref{thm:quasilin} still applies, because in both cases we neither remove nor introduce assumptions.

\emph{Case $\Ind_{\bool,A(b)}b\,\,M^{A(\true)}\,N^{A(\false)}$}. We define the extracted terms as in Theorem \ref{thm:quasilin} with the only change that $\contract {}{}{}$ is again defined as above. Lemma \ref{lem:markcontract} allows us to apply the usual soundness proof for this case.

\emph{Case $\Ind_{\nat,A(n)}\,n\,M^{A(0)}\,(\lam {n,u^{A(n)}} N^{A(n+1)})$}. By induction hypothesis we have:
\begin{alignat*}2
 \text{ a proof }M'\text{ of }\quad &\intd {A(0)}{\etdsp M}{y_A}\qquad&&\text{from}\quad m_i \neq \true \imp \intd {C_i}{x_i}{s_i},\\
 \text{ proofs }M''_i\text{ of }\quad &m_i = \false \imp \neg \intd {C_i}{x_i}{s_i},\\
 \text{ a proof }N'\text{ of }\quad&\intd {A(n+1)}{\etdsp N}{y_A}\qquad&&\text{from}\quad n_i \neq \true \imp \intd {C_i}{x_i}{r_i}\text{ and }\\
 &&&\phantom{\text{from}}\quad n_u \neq \true \imp \intd {A(n)}{x_u}{r_u},\\
 \text{ proofs }N''_i\text{ of }\quad& n_i = \false \imp \neg \intd {C_i}{x_i}{r_i},\\
 \text{ a proof }N''_u\text{ of }\quad& n_u = \false \imp \neg \intd {A(n)}{x_u}{r_u},
\end{alignat*}
where $\markby{s_i}{m_i} \redeq \etdsn M_i y_A$ and $\markby {r_j}{n_j} \redeq \etdsn N_j y_A$ for $i$ and $j$ ranging over $\FA(M)$ and $\FA(N)$, respectively.

We define the extracted terms $\et\PP, \etdp \PP$ and $\etdn \PP_i$ as in Theorem \ref{thm:quasilin}, but using the marker-aware variant of $\contract{}{}{}$. We have
\begin{alignat*}2
\tag{$*$}\label{eq:defl}
 &\etds \PP\subst n0 &&\redeq \etds M\\
 &\etdsp \PP\subst{n}{n+1} &&\redeq \lett{x_u := \etds \PP}\etdsp N,\\
 &\etdsn \PP_i \subst{n}{n+1}y_A &&\redeq \lett{x_u := \etds\PP}\contract {u_i}{\etdsn N_iy_A}{\etdsn\PP_i(\etdsn N_uy_A)}.
\end{alignat*}

Let us denote $\markby{t_i}{p_i} \redeq \etdsn \PP_i y_A$. To prove soundness of the term $\etds \PP$ we will use induction on $n$ to prove the formulas
\begin{align*}
 F(n) &:= \all {y_A} \bigg(\Big(\bigwedge_i \big(p_i \neq \true \imp \intd {C_i}{x_i}{t_i}\big)\Big) \imp \intd A {\etdsp \PP} {y_A}\bigg),\\
 G_i(n) &:= \all {y_A} \Big( p_i = \false \imp \neg \intd {C_i}{x_i}{t_i} \Big).
\end{align*}

For $n=0$ we can directly use the proofs $M'$ and $M''_i$ from the induction hypothesis, with the necessary implication introductions.
Now let us assume $F(n)$ and $G_i(n)$, fix $y_A$ and the premises $p_i^+ \neq \true \imp \intd {C_i}{x_i}{t_i^+}$, where $\markby {t_i^+}{p_i^+} \redeq \etdsn \PP_i \subst n{n+1}y_A$. By \eqref{eq:defl} and the properties of $\contract{}{}{}$ from Lemma \ref{lem:markcontract} we can conclude
\begin{align*}
 &\Big(n_i\Xi \neq \true\Big) \imp \intd {C_i}{x_i}{r_i\Xi},\tag{1}\label{eq:c1}\\
 &\Big(p_i\subst{y_A}{r_u\Xi} \neq \true\Big) \imp \intd {C_i}{x_i}{t_i\subst{y_A}{r_u\Xi}},\tag{2}\label{eq:c2}
\end{align*}
where $\Xi := \subst{x_u}{\etdsp\PP}$.

$G_i(n+1)$ can be shown from the proofs $N''_i$ and $G_i(n)$ instantiated with $y_A := r_u\Xi$ by using \eqref{eq:defl} and Lemma \ref{lem:markcontract}. To prove $F(n+1)$ we start by using the induction hypothesis $F(n)$ for $y_A := r_u\Xi$ and \eqref{eq:c2} to obtain $\intd A{\etdsp \PP}{r_u\Xi}$. In order to continue further, we need to consider subcases on $n_u\Xi$.

In case $n_u\Xi = \bot$, by \eqref{eq:c1} we have all premises of $N'\Xi$, thus we can conclude $\intd A{\etdsp \PP\subst{n}{n+1}}{y_A}$, which was to be shown.

In case $n_u\Xi = \false$, we can use $N''_u\Xi$ to derive a contradiction and conclude by using $\efq$.

In case $n_u\Xi = \true$, the premise $n_u\Xi \neq \true \imp \intd {A(n)}{x_u}{r_u\Xi}$ of $N'\Xi$ is trivially true and by \eqref{eq:c1} we have all other premises, hence 
$\intd A{\etdsp \PP\subst{n}{n+1}}{y_A}$.
\end{proof}

\end{markcounter}
\end{linextr}

\section{Conclusion and future work}
The presented variant of the Dialectica interpretation interleaves extracted programs with additional information, which is utilised during evaluation in order to omit redundant calculations. Other kinds of redundancies can be avoided by means of uniform annotations, as described in \cite{ldrev,decorating}. It can be argued that extensions of such technical nature may obscure the obtained computational content. However, such ideas seem to be practically applicable when implementing an automatic extraction method, with the goal to find a correct program, which is not less efficient than a non-verified hand-written program. A topic of further investigation would be to find with a suitable combination between the refined uniform annotations described in \cite{dialfine} and the current extension of the interpretation.

%\bibliographystyle{eptcs}
%\bibliography{dialectica}

\end{document}